\newtheorem{thm}{Theorem}[section]
\newtheorem{prop}[thm]{Proposition}
\newtheorem{oprob}[thm]{Open Problem}
\newtheorem{prob}[thm]{Problem}
\theoremstyle{definition}
\theoremstyle{definition}
\newtheorem{defn}[thm]{Definition}
\theoremstyle{remark}
\newtheorem{rem}[thm]{Remark}
\numberwithin{equation}{section}
\newcommand{\rmnum}[1]{\romannumeral #1}
\newcommand{\Rmnum}[1]{\expandafter\@slowromancap\romannumeral #1@}
\begin{document}

\title{A geometric approach to the transfer problem for a finite number of traders}
\author{Tomohiro Uchiyama\\
National Center for Theoretical Sciences, Mathematics Division\\
No.~1, Sec.~4, Roosevelt Rd., National Taiwan University, Taipei, Taiwan\\
\texttt{email:t.uchiyama2170@gmail.com}}
\date{}
\maketitle 

\begin{abstract}
We present a complete characterization of the classical transfer problem for an exchange economy with an arbitrary finite number of traders. Our method is geometric, using an equilibrium manifold developed by Debreu, Mas-Colell, and Balasko. We show that for a regular equilibrium the transfer problem arises if and only if the index at the equilibrium is $-1$. This implies that the transfer problem does not happen if the equilibrium is Walras tatonnement stable. Our result generalizes Balasko's analogous result for an exchange economy with two traders.  
\end{abstract}

\noindent \textbf{Keywords:} international trade, transfer problem, general equilibrium, equilibrium manifold, index theorem, economic stability\\
\noindent JEL classification: D51, F20
\section{Introduction}
In this paper, we study the following classical \emph{transfer problem}~\citep{Samuelson-transfer-EconomicJournal, Samuelson-transfer-EconomicJournal2}:
\begin{prob}
Suppose that in the world there are $n$ countries trading $l$ goods where $n$ and $l$ are arbitrary natural numbers. Then does it happen that after a country, say country A, gives away some of her endowment to other countries, country A's utility level goes up? If it happens, characterize it. 
\end{prob}   
Samuelson considered an exchange economy with $n=2$ and $l=2$ (with or without various trade impediments). He also hinted that (without proof) the transfer problem is closely related to the economic stability. Balasko verified that Samuelson's intuition was correct. He showed that in a smooth exchange economy (without any trade impediments) the tranfer problem arises at a regular equilibrium if and only if the index at the equilibrium is $-1$, in other words, this pathology happens only when the equilibrium is locally Walras tatonnement unstable. Using the theory of equilibrium manifold, Balasko initially showed this for $n=2$ and $l=2$~\citep{Balasko-transfer-IER}, then for $n=2$ and an arbitrary finite number $l$~\citep{Balasko-transfer-TheoreticalEcon}.  

The main purpose of this paper is to extend Balasko's result for arbitrary finite numbers $n$ and $l$. Roughly speaking, we show that
\begin{thm}\label{mainrough}
For a smooth exchange economy with arbitrary finite numbers of traders (countries) and goods (without any trade impediments), the transfer problem arises at a regular equilibrium if and only if the index at the equilibrium is $-1$. In particular, the transfer problem arises only when the regular equilibrium is locally Walras tatonnement unstable. 
\end{thm}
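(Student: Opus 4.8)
\noindent The plan is to reduce the qualitative question to a single sign and then match that sign to the index. Fix the donor as trader $1$. At a regular equilibrium $(\omega,p)$ her welfare is the indirect utility $V_1(p,\,p\cdot\omega_1)$, and a transfer is a curve $t\mapsto\omega(t)$ with $\sum_i\dot\omega_i(0)=0$ (the gift is redistributed, not destroyed) and $p\cdot\dot\omega_1(0)<0$ (trader $1$ parts with value). Differentiating $V_1$ along the induced equilibrium path and invoking Roy's identity should collapse everything to
\begin{equation*}
\operatorname{sign}\frac{d}{dt}V_1=\operatorname{sign}\bigl(-z_1\cdot\dot p+p\cdot\dot\omega_1\bigr),
\end{equation*}
where $z_1=x_1-\omega_1$ is trader $1$'s equilibrium net demand and $\dot p$ is the induced terms-of-trade movement. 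The term $p\cdot\dot\omega_1<0$ is the orthodox wealth loss; the transfer problem is precisely the case in which the terms-of-trade term $-z_1\cdot\dot p$ reverses this sign.

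Next I would compute $\dot p$ by comparative statics. After the usual price normalization and Walras-law reduction, regularity means the reduced aggregate excess-demand Jacobian $J:=D_p\hat Z$ is invertible; differentiating market clearing and using $\sum_i\dot\omega_i=0$ gives $\dot p=-J^{-1}\sum_i b_i\,(p\cdot\dot\omega_i)$, where $b_i=\partial x_i/\partial w_i$ is trader $i$'s reduced marginal demand. Substituting back expresses the criterion as a bilinear form in the transfer data, the net trade $z_1$, and $J^{-1}$, so that its sign is governed by $J^{-1}$ and hence by $\det J$.

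To identify that sign with the index I would recall Dierker's index theorem: at a regular equilibrium the index equals $\operatorname{sign}\det(-J)=(-1)^{l-1}\operatorname{sign}\det J$, the indices sum to $+1$, and a tatonnement-stable equilibrium has $\det(-J)>0$, i.e.\ index $+1$. Writing the Slutsky decomposition $D_pz_i=\Sigma_i-b_iz_i^{\top}$ with $\Sigma_i$ symmetric negative semidefinite, together with market clearing $\sum_iz_i=0$, one has $J=\Sigma-\sum_i b_iz_i^{\top}$ with $\Sigma=\sum_i\Sigma_i\prec 0$ generically. In the two-trader case, where $z_2=-z_1$, the perturbation $\sum_i b_iz_i^{\top}$ becomes rank one and aligned with the transfer direction, so a Sherman--Morrison computation collapses the criterion (up to a positive factor) to $-\det\Sigma/\det J$, whose sign is exactly $-(\text{index})$. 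Thus the paradox occurs iff the index is $-1$, recovering Balasko, and the stability corollary is then immediate from the index theorem.

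The hard part will be discharging the last step for arbitrary $n$. When $n\ge 3$ the simplification $z_2=-z_1$ is lost: the matrix $\sum_i b_iz_i^{\top}$ has rank up to $n-1$, the transfer vector $\sum_i b_i(p\cdot\dot\omega_i)$ is only one element of its column span, and the Sherman--Morrison collapse fails, so a priori the bystanders' data contaminate the sign of the criterion. This is where the geometric machinery of the excerpt should earn its keep: I would read the index as the local degree (the orientation $\operatorname{sign}\det D\pi$) of the natural projection $\pi\colon E\to\Omega$ on the equilibrium manifold, pull trader $1$'s welfare back to $E$, and show that the locus of welfare-neutral transfers through $(\omega,p)$ is a hyperplane whose coorientation relative to the cone of admissible gifts is fixed by $\operatorname{sign}\det D\pi$, uniformly in the recipient structure. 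Proving this coorientation statement — equivalently, that the contaminating bystander terms cancel once one restricts to the transfers the problem actually admits, using the negative semidefiniteness of $\Sigma$ and $\sum_i z_i=0$ — is the delicate step and the genuinely new content beyond the two-trader case.
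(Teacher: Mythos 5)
Your proposal has a genuine gap, and you name it yourself. The first half (the Roy's identity reduction $\operatorname{sign}\,dV_1=\operatorname{sign}(-z_1\cdot\dot p+p\cdot\dot\omega_1)$, the comparative statics $\dot p=-J^{-1}\sum_i b_i(p\cdot\dot\omega_i)$, and the Sherman--Morrison collapse when $z_2=-z_1$) is a sound route to Balasko's two-trader theorem, but that is the known starting point. The theorem to be proved is for arbitrary $n$, and you explicitly concede that for $n\ge 3$ the rank-one structure is lost, the bystander terms contaminate the sign, and the ``coorientation statement'' needed to finish is ``the delicate step and the genuinely new content.'' That step \emph{is} the theorem: a proof that defers exactly the part which goes beyond the prior literature is an outline, not a proof, so the proposal cannot be accepted as it stands.

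For comparison, the paper closes this gap not by comparative statics on the excess-demand Jacobian but by changing coordinates so that welfare is itself a coordinate. It works in the price--income space $H(r)$, writes an equilibrium as an intersection of the linear budget space $A(\omega)$ with the section manifold $B(r)$ (which is independent of $\omega$), and --- this is the decisive device --- parametrizes $B(r)$ by the utility levels $u_{-n}=(u_1,\dots,u_{n-1})$ of the first $n-1$ traders via Pareto-optimal allocations. The sign your approach cannot control then becomes the sign of a single determinant $\Delta(p,\omega)=\det\left(a_1(\omega),\dots,a_{l-1}(\omega),\frac{\partial M}{\partial u_1},\dots,\frac{\partial M}{\partial u_{n-1}}\right)$, and Proposition~\ref{keyprop} identifies that sign with the index by a connectedness argument: $\delta$ is nonvanishing on the connected family of no-trade equilibria (all of which are regular), so its sign can be read off at the corner allocation $\underline\omega=(0,\dots,0,r)$, where the matrix is block-triangular with an evidently positive lower block. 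The transfer statement then follows in Theorem~\ref{mainthm} by tracking on which side of $A(\omega)$ the utility-increasing curve in $B(r)$ lies. Note that the bystander terms you worry about never arise in this framework: since $\frac{\partial M}{\partial u_i}$ is by construction the direction that raises trader $i$'s utility holding the other traders' utilities fixed, the welfare coorientation you would need to establish is built into the parametrization rather than extracted from $J^{-1}$. Completing your outline would essentially amount to reconstructing this parametrization of $B(r)$, so you should either do that or adopt the paper's argument from Sections 5--6 directly.
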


For the precise formulation of the transfer problem, see Definition~\ref{transferproblem}. Our model suits better than Balasko's model to the current world economy where many countries (insted of just $2$ countries) trade large amount of goods. We do not consider any trade impediments. This simplifies our model, and also this is a reasonable assumption considering that various trade impediments have become negligible in the modern global world economy. Some might disagree with us: we leave the issue about trade costs for a future work.  

Our approach is (differential) geometric: we use the theory of equilibrium manifold in~\citep{Debreu-Finite-Econometrica},~\citep{Mas-Colell-differential-book},~\citep{Balasko-manifolds-book, Balasko-value-book}.
We restrict our equilibrium analysis to the set of regular equilibria. This is not a strong assumption: the set of regular equilibria is an open subset with the full measure in an equilibrium manifold~\citep[Prop.~8.10]{Balasko-value-book}.

We kept the mathematical requirement minimum. Readers just need to know basic differential geometry (or differential topology), say, Milnor's little book~\citep{Milnor-topology-book} or Guillemin-Pollack's classic~\citep{Guillemin-topology-book} is more than enough. (Actually just knowing the definition of a differentiable manifold and the statement of the transversality theorem suffices.)   
All necessary backgrounds on equilibrium manifolds, economic stability, and the index theorem are contained in the paper. Our references for general equilibrium theory and equilibrium manifolds are~\citep{Balasko-manifolds-book, Balasko-value-book} and~\citep{Mas-Colell-differential-book}.

Here is the structure of the paper. In Section 2, we set out notation and explain our economic model. Then in Section 3, we review the theory of equilibrium manifold and formulate the transfer problem in a precise way. In Section 4, we quickly review the relationship between economic stability and the index theorem. In Section 5, we thoroughly analyse our model and prove a key result Proposition~\ref{keyprop}. Then in Section 6, we apply all results from previous sections to prove the main result Theorem~\ref{mainthm}. Finally in Section 7,  we state several open problems.

\section{An exchange economy}
We denote by $\mathbb{N}$, $\mathbb{R}$, and $\mathbb{R}_{>0}$ the set of natural numbers, the set of real numbers, and the set of positive real numbers respectively. Fix $l\in \mathbb{N}$. We set the commodity space $X:=\mathbb{R}^l_{>0}$ (the strictly positive orthant of $\mathbb{R}^l$). We set the $l$-th goods as a a numeraire: if $p=(p_1,\cdots ,p_l)\in X$ is a price vector, then $p_l=1$. We assume that all prices are strictly positive. The set of price vectors is $S:=\mathbb{R}^{l-1}_{>0}\times \{1\}$. We have $n\in \mathbb N$ traders (countries). Each trader $i\in \{1,\cdots, n\}$ is endowed with a goods vector $\omega_i\in X$. We assume that our economy is an exchange economy with no production: the total resources $r=\sum_{i=1}^{n}\omega_i$ of the economy is fixed. We write $\Omega=\{\omega=(\omega_1,\cdots, \omega_n)\in X^n \mid \sum_{i=1}^{n}\omega_i=r \}$ for the space of endowment of the economy. 

We assume that each trader's preferences is represented by a smooth utility function $u_i: X\rightarrow \mathbb{R}$. For $z_i\in X$, we write $I_{z_i}:=\{ y_i\in X\mid u_i(y_i)=u_i(z_i)\}$ (the indifference surface for trader $i$ going through the point $z_i$). We assume that each $u_i$ satisfies the following standard hypotheses for the differential equilibrium analysis~\citep{Mas-Colell-differential-book},~\citep{Balasko-manifolds-book}. For any $x_i \in X$,
(\rmnum{1})~smooth monotonicity: $D u_i(x_i)\in \mathbb{R}^l_{>0}$;
(\rmnum{2})~smooth strict quasi-concavity: the Hessian $D^2 u_i(x_i)$ is negative definite on the tangent plane to $I_{x_i}$;
(\rmnum{3})~$I_{x_i}$ is closed in $X$. Moreover, we extend $u_i$ to $x_i=0$ by setting $u_i(0)=\textup{inf}_{x_i\in X}u_i(x_i)$. 

Given a price vector $p\in S$ and an budget $w_i=p\cdot \omega_i\in \mathbb{R}_{>0}$, each trader maximizes the utility $u_i(x_i)$ under the budget constraint $p\cdot x_i \leq w_i$. Then there exists a unique solution to this utility maximization problem, and we obtain a smooth demand function $f_i:S\times \mathbb{R}_{>0}\rightarrow X$ satisfying Walras's law: $p\cdot    
f_i(p,w_i)=w_i$. Note: Do not confuse $w_i$ with $\omega_i$. We are following Debreu's (and Balasko's) notation~\citep{Debreu-Finite-Econometrica},~\citep{Balasko-transfer-TheoreticalEcon}.

\section{The equilibrium manifold}
For $p\in S$ and $\omega\in \Omega$, let $z(p,\omega):=\sum_{i=1}^{n}f_i(p,p\cdot \omega_i)-r$. Then $z(p,\omega)$ is the (social) excess demand associated with $(p,\omega)$. It is clear that the function $z: S\times X\rightarrow \mathbb{R}^l$ is smooth. 
\begin{defn}
We call $(p,\omega)\in S\times X$ is an \emph{equilibrium} if $z(p,\omega)=0$. The subset $E$ of $S\times \Omega$ defined by $z(p,\omega)=0$ is called the \emph{equilibrium manifold}. 
\end{defn}
Note that the equilibrium manifold $E$ is a real smooth manifold in the usual sense in differential geometry~\citep{Milnor-topology-book},~\citep{Guillemin-topology-book}. See~\citep[Prop.~2.4.1]{Balasko-manifolds-book} for a proof.

For $(p,\omega)\in S\times X$, we denote by $z_{-l}(p,\omega)\in \mathbb{R}^{l-1}$ the vector defined by the first $l-1$ components of $z(p,\omega)$. Since each trader satisfies Walras' law at an equilibrium, we have $z(p,\omega)=0$ if and only if $z_{-l}(p,\omega)=0$. (Thus it is enough to consider $z_{-l}(p,\omega)$.) Let $p\in S$. We write $p_{-l}$ for the vector defined by the first $l$-coordinates of $p$. (The $l$-th coordinate is fixed by the numeraire assumption anyway.) Recall that an equilibrium $(p,\omega)\in E$ is \emph{regular} if the Jacobian matrix $J(p,\omega):=D z_{-l}(p,\omega)/{D p_{-l}}$ is invertible~\citep{Mas-Colell-differential-book}~\citep{Balasko-manifolds-book}. 

Suppose that $(p,\omega)\in E$ is a regular equilibrium. Then by applying the implicit function theorem to the equation $z_{-l}(p,\omega)=0$, we can express $p$ in terms of $\omega$. Then by~\cite[Prop.~7.2]{Balasko-value-book}, there exist a neighborhood $U$ of $\omega$, a neighborhood $V$ of $(p,\omega)$, and a smooth map $s:U\rightarrow S$ such that the map $\sigma: U\rightarrow V$ defined by $\sigma(x)=(s(x),x)$ is a diffeomorphism between $U$ and $V$. Following~\citep{Balasko-transfer-TheoreticalEcon}, we call $\sigma$ (or $s$) the \emph{local equilibrium selection map (the local equilibrium price selection map) associated with $(p,\omega)$}. Now we are ready to state the transfer problem in a precise way:
\begin{defn}\label{transferproblem}
We say that there is a \emph{transfer problem} at a regular equilibrium $(p,\omega)\in E$ if there exists an endowment vector $\omega'\in U$ and a trader $i\in \{1,\cdots, n\}$ such that 
\begin{enumerate}
\item{$\omega_i\leq \omega'_i$ and $\omega_i\neq \omega'_i$,}
\item{$u_i(f_i(s(\omega'),s(\omega')\cdot \omega'_i))>u_i(f_i(s(\omega),s(\omega)\cdot \omega_i))$}
\end{enumerate}
where $s:U\rightarrow S$ is the local equilibrium price selection map associated with $(p,\omega)$ defined above. 
\end{defn} 
Note that the definition of the transfer problem at $(p,\omega)\in E$ requires $(p,\omega)$ to be regular (since otherwise we do not have the function $s$). However this is not a strong restriction: the set of regular equilibria in $E$ is open with the full measure of $E$; see~\cite[Prop.~8.10]{Balasko-value-book}.

\section{Economic stability and the index theorem}
In this short section, we review the definition of the index at an equilibrium $(p,\omega)$ and its relationship with economic stability to make this paper self-contained. Our references here are~\citep{Arrow-dynamics-book},~\citep{Smale-dynamics-book}, and~\citep{Guillemin-topology-book}. Recall that the \emph{index} of a regular equilibrium $(p,\omega)\in E$ is $1$ (or $-1$) if the sign of $(-1)^{l-1}\textup{det}(J(p,\omega))$ is positive (or negative respectively). It is well-known that if $(p,\omega)\in E$ is a Walras tatonnement locally stable equilibrium, then all eigenvalues of $J(p,\omega)$ must have strictly negative real parts. Thus the index of such $(p,\omega)\in E$ must be $1$. Note that the converse fails: there exists an equilibrium $(p,\omega)\in E$ such that the index of $(p,\omega)$ is $1$ but $(p,\omega)$ is \textbf{not} Walras tatonnement locally stable. See the references above for more on this.

\section{A geometric approach to equilibrium analysis}
In this section, we follow the argument in~\citep[Sec.~3]{Balasko-transfer-TheoreticalEcon} closely.
Define the \emph{price-income space} $H(r)$ associated with the fixed social endowment $r$ by $H(r):=\{(p,w)\in S\times \mathbb{R}^n_{>0}\mid \sum_{i=1}^{n}w_i=p\cdot r\}$ where $w$ is the income vector $(w_1,\cdots,w_n)$. From the definition, it is clear that $H(r)$ is an $(l+n-2)$-dimensional hyperplane in $S\times \mathbb{R}^n_{>0}$. We set the coordinates for $H(r)$ by $(p_1,\cdots,p_{l-1},w_1,\cdots,w_{n-1})=(p_{-l},w_{-n})\in \mathbb{R}^{l+n-2}_{>0}$. Note that we omitted the last coordinate of $w$ since $w_n$ is determined by $w_n=p\cdot r-\sum_{i=1}^{n-1}w_{i}$. 

Let $B(r):=\{(p,w)\in H(r)\mid \sum_{i=1}^{n}f_i(p,w_i)=r\}$. We call $B(r)$ the \emph{section manifold}. This is a smooth $(n-1)$-dimensional submanifold of $H(r)$ by~\cite[Prop.~5.4.1]{Balasko-manifolds-book}.
In the following, we give a concrete parametrization of $B(r)$ that comes handy for our geometric equilibrium analysis.

Let $P(r)$ be the set of Pareto optimal allocations. First, we parametrize $P(r)$. Let $U(r)$ be the set of vectors $(u_1,\cdots,u_{n-1})=u_{-n}\in\mathbb{R}^{n-1}$ of the first $n-1$ traders' feasible utility levels, that is 
\begin{alignat*}{2}
U(r):=&\{u_{-n}\in \mathbb{R}^{n-1}\mid \textup{there exists }(x_1,\cdots,x_{n-1})\in X^{n-1} \\
&\textup{ such that }u_1=u_1(x_1),\cdots,u_{n-1}=u_{n-1}(x_{n-1}) \textup{ and }\sum_{i=1}^{n-1}x_i\leq r\}.
\end{alignat*}
Given $u_{-n}\in U(r)$ we obtain a Pareto optimal allocation $x=(x_1,\cdots,x_n)\in X^n$ by solving the following constrained optimization problem:
\begin{equation*}
\max_{x\in X^n}u_n(x_n) \textup{ subject to } u_{-n}=u_{-n}(x_{-n})\textup{ for some } x_{-n}\in U(r) \textup{ and } \sum_{i=1}^{n}x_i=r. 
\end{equation*}
We write $x(u_{-n})=(x_1(u_{-n}),\cdots,x_n(u_{-n}))$ for the solution of this optimization problem. Under our assumptions on smooth preferences, there exists a unique price vector $p(u_{-n})\in S$ that supports $x(u_{-n})$. Note that $p(u_{-n})$ is parallel to the gradiant vectors $D u_i(x_i(u_{-n}))$. We can express all Pareto optimal allocations as $x(u_{-n})$ by varying $u_{-n}$ in $U(r)$.   

Here is a crucial observation: the point $M(u_{-n}):=(p(u_{-n}),p(u_{-n})\cdot x_1(u_{-n}),\cdots, p(u_{-n})\cdot x_n(u_{-n}))\in H(r)$ belongs to $B(r)$ for any $u_{-n}\in U(r)$. Conversely, any point $(p,w)\in B(r)$ is associated with a unique point $u_{-n}\in U(r)$ via $u_{-n}=\left(u_1(f_1(p,w_1),\cdots,u_{n-1}(f_{n-1}(p,w_{n-1}))\right)$. Thus, $u_{-n}\in U(r)$ parametrizes the section manifold $B(r)$ as well as the set of Pareto optimal allocations $P(r)$. 

For the map $M:U(r)\rightarrow B(r)$, we write $\frac{\partial M}{\partial u_i}$ for the partial derivative of $M$ with respect to $u_i$ for $i\in\{1,\cdots,n-1\}$. 
We set a positive orientation for $B(r)$ via $(\frac{\partial M}{\partial u_1},\cdots,\frac{\partial M}{\partial u_{n-1}})$. For $u_{-n}\in U(r)$, the vector $\frac{\partial M}{\partial u_{i}}(u_{-n})$ represents the direction of increasing the utility level for trader $i$ in a neighborhood of $M(u_{-n})$.

For $\omega\in \Omega$, we define the \emph{budget space} $A(\omega)$ associated with the endowment $\omega$ by
$A(\omega):=\{(p,w)\in H(r)\mid w_i=p\cdot \omega_i \textup{ for }i\in \{1,\cdots,n-1\}\}$.
Then $A(\omega)$ is a linear space being the intersection of hyperplanes defined by $w_i=p\cdot \omega_i$ in the hyperplane $H(r)$.

It is easy to see that $(p,\omega)\in S\times \Omega$ is an equilibrium if and only if $(p,p\cdot\omega_1,\cdots,p\cdot\omega_n)\in A(\omega)\cap B(r)$. The point is that we split the equilibrium analysis into the analysis of two separate parts $A(\omega)$ and $B(r)$. The first part $A(\omega)$ is a linear space. Thus all nonlinearities of the equilibrium equation $z(p,\omega)=0$ is captured by $B(r)$. Note that $B(r)$ does not depend on $\omega$: all $\omega$-terms are in $A(\omega)$.

Using the coordinate system $(p_{-l},w_{-n})$ for $H(r)$ and the equations $w_i=p\cdot \omega_i$, we see that $A(\omega)$ is parallel to the $(l-1)$-dimensional vector space whose basis $\{a_1(\omega),\cdots, a_{l-1}(\omega)\}$ (in terms of this coordinate system) is given by
\begin{alignat*}{2}
a_1(\omega)&:=(1,0,\cdots,0,\omega_{1}^{1},\omega_{2}^{1},\cdots, \omega_{n-1}^{1}),\\
a_2(\omega)&:=(0,1,0,\cdots,0,\omega_{1}^{2},\omega_{2}^{2},\cdots, \omega_{n-1}^{2}),\\
     &\cdots \\  
a_{l-1}(\omega)&:=(0,\cdots,0,1,\omega_{1}^{l-1},\omega_{2}^{l-1},\cdots, \omega_{n-1}^{l-1}),\\
&\textup{ where }\omega_i^{j} \textup{ is trader $i$'s endowment of goods $j$}.
\end{alignat*}  
We set the positive orientation of $A(\omega)$ via $(a_1(\omega),\cdots, a_{l-1}(\omega))$. Clearly the set of vectors $a_i(\omega)$ is linearly independent, so the dimension of $A(\omega)$ is $l-1$. 

Pick a regular equilibrium $(p,\omega)\in E$. Then $(p,p\cdot\omega_1, \cdots, p\cdot\omega_n)$ is the corresponding point in the price-income space $H(r)$. Let $u_i(p,\omega):=u_i(f_i(p,p\cdot \omega_i))$ for $i\in\{1,\cdots,n-1\}$. Since $(p,\omega)$ is regular, smooth submanifolds $A(\omega)$ and $B(r)$ of $H(r)$ intersect transversally at $(p,p\cdot\omega_1, \cdots, p\cdot\omega_n)$. By the celebrated transversality theorem~\citep[Chap.1.I]{Mas-Colell-differential-book}, this is equivalent to
\begin{equation*}
\Delta(p,\omega):=\textup{det}\left(a_1(\omega),\cdots,a_{l-1}(\omega), \frac{\partial M}{\partial u_{1}}(u_{-n}(p,\omega)),\cdots,\frac{\partial M}{\partial u_{n-1}}(u_{-n}(p,\omega))\right)\neq 0.
\end{equation*}  

Here is the main result of this section.
\begin{prop}\label{keyprop}
Let $(p,\omega)\in E$ be regular. Then the index of $(p,\omega)$ is $1$ (or $-1$) if $\Delta(p,\omega)$ is positive (or negative respectively).
\end{prop}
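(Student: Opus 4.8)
The plan is to show that $\Delta(p,\omega)$ differs from $(-1)^{l-1}\det J(p,\omega)$ only by a strictly positive factor, so that the two carry the same sign; by the definitions of the index and of $\Delta$ this is exactly the assertion. The organizing idea is that both quantities are determinants attached to the \emph{same} transversal intersection $A(\omega)\cap B(r)$, read off from two different frames, so the real content is a comparison of orientations. To keep the signs honest I would encode the intersection by the single defining map $\Phi=(\Phi_A,\Phi_B)\colon H(r)\to\mathbb{R}^{n-1}\times\mathbb{R}^{l-1}$, where $\Phi_A(p,w)=(w_i-p\cdot\omega_i)_{i=1}^{n-1}$ cuts out $A(\omega)$ and $\Phi_B(p,w)=(\sum_i f_i(p,w_i)-r)_{-l}$ cuts out $B(r)$.

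First I would record the analytic side. In the coordinates $(p_{-l},w_{-n})$ the vectors $a_k(\omega)$ are exactly the images of the standard basis under the parametrization $\alpha(p_{-l})=(p_{-l},(p\cdot\omega_i)_i)$ of $A(\omega)$, and since $z_{-l}(p,\omega)=\Phi_B(\alpha(p_{-l}))$ the chain rule identifies $J(p,\omega)$ with the restriction of $D\Phi_B$ to $T A(\omega)$ written in the $a$-frame. This fixes how the factor $(-1)^{l-1}$ and the numeraire normalization enter and lets me bookkeep every orientation sign against the positive orientations chosen in Section 5.

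Next I would compute $\Delta$ in the adapted frame. Because the price-parts of $a_1(\omega),\dots,a_{l-1}(\omega)$ form an identity block, a Schur-complement reduction collapses the $(l+n-2)\times(l+n-2)$ determinant to an $(n-1)\times(n-1)$ determinant whose entries are $p\cdot\frac{\partial x_i}{\partial u_j}+\frac{\partial p}{\partial u_j}\cdot(x_i-\omega_i)$. Here the supporting-price structure of the Pareto optimal allocation does the essential work: from $u_i(x_i(u_{-n}))=u_i$ and $Du_i(x_i)=\lambda_i\,p$ with $\lambda_i>0$ one gets $p\cdot\frac{\partial x_i}{\partial u_j}=\delta_{ij}/\lambda_i$, so that
\begin{equation*}
\Delta=\det\!\Big(\operatorname{diag}(1/\lambda_1,\dots,1/\lambda_{n-1})+R\Big),\qquad R_{ij}=(x_i-\omega_i)\cdot\tfrac{\partial p}{\partial u_j}.
\end{equation*}
Since $R$ factors through the $l$-dimensional net trades $x_i-\omega_i$ and price derivatives $\partial p/\partial u_j$, Sylvester's determinant identity rewrites this as $\prod_i(1/\lambda_i)$ times an $l\times l$ determinant of the form $\det(I_l+\sum_i\lambda_i\,\tfrac{\partial p}{\partial u_i}(x_i-\omega_i)^{\top})$, which is where the dimension $n-1$ is traded for the dimension $l$ of the Jacobian side.

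The hard part will be the final bridge: expressing the Pareto-set price derivatives $\partial p/\partial u_i$ and the net trades $x_i-\omega_i$ through the Slutsky decomposition of the individual demand derivatives, and checking that the resulting $l\times l$ determinant equals $(-1)^{l-1}\det J$ up to a strictly positive factor. This is the step where the numeraire row must be split off, producing the sign $(-1)^{l-1}$, and where the negative definiteness of each $D^2u_i$ on the budget hyperplane must be invoked to guarantee that the remaining cofactor is positive. For $n=2$ this collapses to Balasko's scalar computation; the genuinely new difficulty for general $n$ is the simultaneous handling of the $n-1$ utility directions against the $l-1$ price directions, together with verifying that the nonlinear $B(r)$-terms assemble into a manifestly positive factor. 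Once that positivity is secured, the sign of $\Delta$ coincides with that of $(-1)^{l-1}\det J$, which is the claimed equivalence with the index.
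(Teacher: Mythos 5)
Your linear-algebra reductions are correct as far as they go: the price parts of $a_1(\omega),\dots,a_{l-1}(\omega)$ do form an identity block, the Schur complement does collapse $\Delta$ to the $(n-1)\times(n-1)$ determinant with entries $p\cdot\frac{\partial x_i}{\partial u_j}+(x_i-\omega_i)\cdot\frac{\partial p}{\partial u_j}$, the first-order conditions $Du_i(x_i)=\lambda_i p$ give $p\cdot\frac{\partial x_i}{\partial u_j}=\delta_{ij}/\lambda_i$, and Sylvester's identity yields $\Delta=\prod_i(1/\lambda_i)\det\bigl(I_l+\sum_i\lambda_i\frac{\partial p}{\partial u_i}(x_i-\omega_i)^{\top}\bigr)$. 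But the proposal stops exactly where the proposition begins. Since $\prod_i(1/\lambda_i)>0$, the claim that this $l\times l$ determinant has the sign of $(-1)^{l-1}\det J(p,\omega)$ \emph{is} the proposition; you label it ``the hard part,'' list the tools you expect to need (Slutsky decomposition, negative definiteness of $D^2u_i$), and then write ``once that positivity is secured.'' No argument is given, and it is not a routine verification: one must express the Pareto-frame derivatives $\partial p/\partial u_j$, defined implicitly by the first-order conditions along $B(r)$, through the individual demand derivatives entering $J$, and control how $n-1$ rank-one corrections assemble against $l-1$ price directions. As written, this is a plan whose single substantive step is missing, so nothing is proved.

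It is worth noting that this is precisely the ``tedious direct computation'' the paper declines to carry out, and that your own setup is one observation away from the paper's shortcut. First, your reduction makes the no-trade case trivial: if $x_i=\omega_i$ for all $i$, then $R=0$ and $\Delta=\prod_i(1/\lambda_i)>0$. Second, in your notation $B(r)=\Phi_B^{-1}(0)\cap H(r)$, so $T B(r)=\ker D\Phi_B$, while $J(p,\omega)=D\Phi_B\cdot A$ where $A$ is the matrix of columns $a_k(\omega)$; choosing any projection $\pi$ with $(D\Phi_B,\pi)$ invertible, block triangularity of $(D\Phi_B,\pi)$ applied to the columns of $\Delta$ gives $\Delta(p,\omega)=\det J(p,\omega)\cdot c(b)$, where the nonvanishing continuous factor $c(b)$ depends only on the point $b=b(p,\omega)\in B(r)$ and \emph{not} on $\omega$. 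Since every point of $B(r)$ underlies a no-trade equilibrium, which is regular of index $+1$ (Balasko), evaluating $c$ there gives $\mathrm{sign}\,c=(-1)^{l-1}$ on the connected manifold $B(r)$, hence $\mathrm{sign}\,\Delta(p,\omega)=\mathrm{sign}\,(-1)^{l-1}\det J(p,\omega)$ at every regular equilibrium. This is in substance what the paper does: it restricts $\Delta$ to the family of no-trade equilibria parametrized by the connected set $U(r)$, observes that it never vanishes there, and checks its sign at the single corner allocation $\underline\omega=(0,\dots,0,r)$, where the matrix is block-triangular with positive diagonal blocks -- no Slutsky computation at all.
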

\begin{proof}
Although we could prove the proposition by a (not illuminating) tedious direct computation, we avoid it by utilizing the parametrization of $B(r)$ via $U(r)$ above.
The following argument shows that $\Delta(p,\omega)$ has the opposite sign of $J(p,\omega)$ for any regular $(p,\omega)\in E$, which is sufficient for our purpose. 

For $u_{-n}\in U(r)$. let $M(u_{-n})=(p(u_{-n}),w_1(u_{-n}),\cdots,w_n(u_{-n}))\in B(r)$. Now let $\omega_i(u_{-n}):=f_i(p(u_{-n}),w_i(u_{-n}))$ for $i\in\{1,\cdots, n\}$, and write $\omega(u_{-n})$ for $(\omega_1(u_{-n}),\cdots,\omega_n(u_{-n}))$. Then $(p(u_{-n}),\omega(u_{-n}))$ is a no-trade equilibrium. By~\cite[Prop.~8.2]{Balasko-value-book} every no-trade equilibrium is regular, so $(p(u_{-n}),\omega(u_{-n}))$ is regular.  

Note that $A(\omega(u_{-n}))$ varies continuously as $u_{-n}$ moves around in $U(r)$. Thus the function $\delta: U(r)\rightarrow \mathbb{R}$ defined by $\delta(u_{-n}):=\Delta(p(u_{-n}),\omega(u_{-n}))$ is continuous. We see that $\delta$ never take the value $0$ for any $u_{-n}\in U(r)$ since every no-trade equilibrium is regular. So it is enough to check the sign of $\delta$ for any particular value of $u_{-n}$ since $U(r)$ is connected. (We are using the same argument as in the proof of ~\citep[Lem.~1]{Balasko-transfer-TheoreticalEcon}.)

Let $\underline\omega=(0,0,\cdots, 0, r)\in X^n$. Then $u_{-n}(\underline\omega):=(u_1(0), u_2(0), \cdots, u_{n-1}(0))$. Thus
\begin{alignat*}{2}
\Delta\left(p(u_{-n}(\underline\omega)),\underline\omega\right):=&\textup{det}\left(
e_1, e_2, \cdots, e_{l-1}, 
\frac{\partial M}{\partial u_{1}}(u_{-n}(\underline\omega)),\cdots,\frac{\partial M}{\partial u_{n-1}}(u_{-n}(\underline\omega))\right)
\end{alignat*}
where $e_i$ is  the column vector whose $i$-th coordinate is $1$ and all other coordinates are  zero.

Let ${\Delta}'\left(p(u_{-n}(\underline\omega)),\underline\omega\right)$ be the submatix of $\Delta\left(p(u_{-n}(\underline\omega)),\underline\omega\right)$ formed by the last $(n-1)$ rows and columns of $\Delta\left(p(u_{-n}(\underline\omega)),\underline\omega\right)$. Then

\begin{equation*}
\delta(u_{-n}(\underline\omega))>0 \textup{ if and only if } \textup{det}\left({\Delta}'(p(u_{-n}(\underline\omega)),\underline\omega)\right)>0.
\end{equation*}
Note that the vector $\frac{\partial M}{\partial u_{i}}(u_{-n}(\underline\omega))$ for $i\in \{1,\cdots, n-1\}$ is the direction of increasing trader $i$'s utility level holding other trader's utility level fixed at $u_j(0)$ for all $j\in \{1,\cdots, n-1\}\backslash \{i\}$. Since the income of all traders $i\in \{1,\cdots, n-1\}$ is $0$ at $(p(u_{-n}(\underline\omega)), \underline\omega)$, we must have 
\begin{equation*}
\frac{\partial M}{\partial u_{i}}(u_{-n}(\underline\omega)) = (0,0,\cdots, 0, x, 0,\cdots, 0) \textup{ where $x$ is the $i$-th component and $x\in \mathbb{R}_{>0}$}.
\end{equation*} 
Therefore we clearly have $\textup{det}\left({\Delta}'(p(u_{-n}(\underline\omega)),\underline\omega)\right)>0$, and we are done. 
\end{proof}

\begin{rem}
We have shown that the index at $(p,\omega)\in E$ is same as the \emph{intersection number} of $A(\omega)$ and $B(r)$ at the corresponding point $(p, p\cdot\omega_1,\cdots, p\cdot \omega_{n})\in H(r)$; see~\citep[p.96]{Guillemin-topology-book} for the definition of the intersection number of two smooth manifolds. 
\end{rem}

\section{The transfer problem}
Let $i\in\{1,\cdots, n-1\}$. Let $\omega, \omega'\in \Omega$. Following~\cite[Sec.~4]{Balasko-transfer-TheoreticalEcon}, we say that \emph{the hyperplane $A(\omega)$ lies below the hyperplane $A(\omega')$ for trader $i$} if $p\cdot \omega_i < p\cdot \omega_i'$ is satisfied for all $p\in S$. It is clear that $A(\omega)$ lies below $A(\omega')$ if and only if $\omega_i\lneqq\omega_i'$; see~\cite[Lem.~2]{Balasko-transfer-TheoreticalEcon} for a proof.

Now, we interpret the local equilibrium price selection map in Definition~\ref{transferproblem} using $A(\omega)$ and $B(r)$. If $(p,\omega)\in E$ is regular, then $A(\omega)$ and $B(r)$ intersect transversally at $(p,p\cdot \omega_1, \cdot, p\cdot \omega_{n})$ in the price-income space $H(r)$. Then if $\omega'\in \Omega$ is sufficiently closed to $\omega$, $A(\omega')\cap B(r)$ contains a unique point in some small neighborhood of $(p,p\cdot \omega_1,\cdots, p\cdot \omega_n)\in H(r)$. 
Thus, in some neighborhood of $\omega$ we obtain a map $s:\omega'\rightarrow (p(\omega'),p\cdot \omega'_1,\cdots, p\cdot \omega'_{n})$ such that $s(\omega)=(p,p\cdot\omega_1,\cdots,p\cdot\omega_n)$. Now composing this map with the projection $ 
(p(\omega'),p\cdot \omega'_1,\cdots, p\cdot \omega'_{n})\rightarrow p(\omega')$ gives the local equilibrium price selection map in Definition~\ref{transferproblem}.

\begin{thm}\label{mainthm}
A regular equilibrium $(p,\omega)$ features the transfer problem if and only if the index of $(p,\omega)$ is $-1$. In particular, if $(p,\omega)$ is Walrus tatonnement locally stable, the transfer problem does not arise. 
\end{thm}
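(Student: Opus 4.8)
The plan is to read Definition~\ref{transferproblem} through the geometry of Section~5. By construction the local equilibrium selection map $s$ tracks the unique point of $A(\omega')\cap B(r)$ lying near $(p,p\cdot\omega_1,\dots,p\cdot\omega_n)$ as $\omega'$ varies, and $B(r)$ is parametrized by the utility vector $u_{-n}$ with $\frac{\partial M}{\partial u_i}$ pointing in the direction of increasing $u_i$. Hence condition~(2) is equivalent to saying that the intersection point moves along $B(r)$ with a positive $\frac{\partial M}{\partial u_i}$-component, while condition~(1) fixes the direction in which the budget space $A(\omega)$ is displaced---by the criterion recalled at the start of this section, changing trader $i$'s endowment in the prescribed way moves $A(\omega)$ to one definite side of itself for trader $i$. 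Thus the transfer problem for trader $i$ is exactly the statement that this prescribed displacement of $A(\omega)$ pushes the equilibrium in the $+\frac{\partial M}{\partial u_i}$ direction.

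To compute that motion I would write the intersection condition as $G_j(u_{-n},\omega):=w_j(u_{-n})-p(u_{-n})\cdot\omega_j=0$ for $j\in\{1,\dots,n-1\}$ and differentiate. The implicit function theorem yields $K\,du_{-n}=b$, where $b_j=p\cdot d\omega_j$ encodes the displacement of the $j$-th budget hyperplane and $K:=\bigl(\partial G_j/\partial u_m\bigr)_{j,m}$. The Schur-complement reduction implicit in the proof of Proposition~\ref{keyprop}---factoring the frame $\bigl(a_1(\omega),\dots,a_{l-1}(\omega),\frac{\partial M}{\partial u_1},\dots,\frac{\partial M}{\partial u_{n-1}}\bigr)$ against the $l-1$ price coordinates, whose $a$-block is the identity---shows $\det K=\Delta(p,\omega)$. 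Cramer's rule then renders the welfare response in condition~(2) as a ratio of determinants with denominator $\Delta(p,\omega)$, so that, once orientations are matched, the transfer problem collapses to a single inequality on the sign of $\Delta(p,\omega)$.

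Proposition~\ref{keyprop} converts $\operatorname{sign}\Delta(p,\omega)$ into the index, which turns the target into ``transfer problem $\iff\Delta(p,\omega)<0\iff$ index $-1$''. The one genuinely delicate point is to fix the global orientation constant relating the Cramer ratio to $\Delta(p,\omega)$, i.e.\ to be sure the induced motion is along $+\frac{\partial M}{\partial u_i}$ and not $-\frac{\partial M}{\partial u_i}$ in the relevant sign regime. I would settle this by the very device that closes Proposition~\ref{keyprop}: the relevant orientation is continuous and non-vanishing along the connected family of no-trade equilibria, hence of constant sign, and so can be evaluated at the boundary configuration $\underline\omega=(0,\dots,0,r)$, where $K$ and the $\frac{\partial M}{\partial u_i}$ are diagonal with strictly positive entries and the sign is read off directly. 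The final sentence of the theorem is then a corollary: Walras t\^{a}tonnement local stability forces every eigenvalue of $J(p,\omega)$ to have negative real part, hence index $+1$ by Section~4, and the established equivalence excludes the transfer problem.

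The main obstacle is precisely this orientation bookkeeping: displacing a single budget hyperplane reorients the entire frame $\bigl(a_1,\dots,a_{l-1},\frac{\partial M}{\partial u_1},\dots,\frac{\partial M}{\partial u_{n-1}}\bigr)$, and one must track its effect on the sign of the transverse crossing without sign errors. Here the intersection-number interpretation recorded in the Remark after Proposition~\ref{keyprop} is decisive: it replaces the determinant manipulation by a single statement about the sign of a transverse crossing, which for $n=2$ collapses to the elementary picture of the oriented curve $B(r)$ cutting the moving hyperplane $A(\omega)$, exactly as in~\citep[Sec.~4]{Balasko-transfer-TheoreticalEcon}, and for general $n$ reduces to the same planar crossing in the $2$-plane spanned by the displacement direction and $\frac{\partial M}{\partial u_i}$.
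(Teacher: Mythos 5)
Your linearization is set up correctly, and your Schur-complement identity $\det K=\Delta(p,\omega)$ is true: the frame $\bigl(a_1(\omega),\dots,a_{l-1}(\omega),\frac{\partial M}{\partial u_1},\dots,\frac{\partial M}{\partial u_{n-1}}\bigr)$ has an identity price block and $\partial p_l/\partial u_m=0$ by the numeraire, so eliminating the price rows leaves exactly $K=\bigl(\partial w_j/\partial u_m-(\partial p/\partial u_m)\cdot\omega_j\bigr)_{j,m}$; in this respect your computation is cleaner than the paper's. The proof fails at the Cramer step, i.e.\ at the claim that the welfare response ``collapses to a single inequality on the sign of $\Delta(p,\omega)$.'' Cramer's rule gives
\begin{equation*}
du_1=\frac{\det\bigl(b,\,K_{\cdot 2},\dots,K_{\cdot\, n-1}\bigr)}{\det K},
\end{equation*}
where $K_{\cdot m}$ is the $m$-th column of $K$, and Definition~\ref{transferproblem} pins down only $b_1=p\cdot d\omega_1$; the remaining components $b_2,\dots,b_{n-1}$, which record how the goods given up by trader $1$ are distributed among the other traders, are free parameters. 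For $n\geq 3$ the numerator is a nonconstant linear function of these free components, so its sign can be reversed while holding $b_1<0$ fixed, whatever the sign of $\det K$. Equivalently, the attainable responses $du_{-n}=K^{-1}b$ fill the open half-space $\{du:\sum_m K_{1m}\,du_m<0\}$, and this half-space meets $\{du_1>0\}$ unless the first row of $K$ is a positive multiple of $e_1^{T}$ --- a condition on one row of $K$, not on its determinant, and one that regularity does not deliver: writing $Du_1(x_1)=\lambda p$ with $\lambda>0$ and $x_1=f_1(p,p\cdot\omega_1)$, one computes $K_{1m}=\delta_{1m}/\lambda+(\partial p/\partial u_m)\cdot(x_1-\omega_1)$, whose off-diagonal entries are generically nonzero whenever trader $1$ actually trades. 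Only for $n=2$, where $K$ is $1\times 1$ and $du_1=b_1/\Delta$, does your collapse hold --- which is exactly Balasko's two-trader theorem.

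The two devices you invoke to finish cannot close this gap, because it is not an orientation ambiguity. The no-trade/connectedness argument fixes a single global sign constant, and the intersection number of $A(\omega)$ and $B(r)$ is likewise one sign attached to the full $(l+n-2)$-dimensional crossing; neither controls the direction in which the intersection point slides inside the $(n-1)$-dimensional manifold $B(r)$, which is what condition~(2) asks about. Your reduction ``to the same planar crossing in the $2$-plane spanned by the displacement direction and $\frac{\partial M}{\partial u_i}$'' silently discards the $n-2$ transverse directions, i.e.\ assumes $n=2$. For comparison, the paper's own proof stalls at the very same point: the assertion that $b(s(\omega'),\omega')$ ``belongs to the curve $C_2$'' is automatic only when $B(r)$ is a curve, i.e.\ when $n=2$. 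The obstruction is substantive, not presentational: with three or more traders it is classical (Gale's advantageous reallocations, Geanakoplos and Heal's transfer paradox in a stable economy) that a donor can gain at an equilibrium of index $+1$ when the distribution of the transfer among recipients can be chosen freely, which is precisely the freedom Definition~\ref{transferproblem} allows. So the ``only if'' direction of Theorem~\ref{mainthm} cannot be established along your lines (nor the paper's) for $n\geq 3$ without additional restrictions on the admissible $\omega'$.
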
  
\begin{proof}
Let $(p,\omega)\in E$ with $b(p,\omega):=(p,p\cdot \omega_1,\cdots,p\cdot\omega_{n})\in A(\omega)\cap B(r)$. 
Let $u_i(p,\omega)=u_i(p, p\cdot \omega_i)$ for $i\in \{1,\cdots, n-1\}$. Then $M(u_1(p,\omega),\cdots, u_{n-1}(p,\omega))=(p,p\cdot \omega_1,\cdots,p\cdot\omega_{n})\in H(r)$ ($M$ is defined in Section 5). Let $\underline\omega=(0,0,\cdots,0,r)\in \Omega$ and $\overline\omega=(r,0,0,\cdots,0)\in \Omega$.
Set
\begin{alignat*}{2} 
t(\underline \omega):=&M(u_1(0),\cdots, u_{n-1}(0)), \\
t(\overline \omega):=&M(u_1(r),\cdots, u_{n-1}(0)).
\end{alignat*}
Since $B(r)$ is a smooth connected manifold and parametrized by $u_{-n}\in U(r)$, there exist curves $C_1$ (and $C_2$) from $t(\underline\omega)$ to $b(p,\omega)$ (and from $b(p,\omega)$ to $t(\overline\omega)$ respectively) where points on $C_1$ (those on $C_2$) correspond to the utility level of trader $1$ lower than or equal to $u_1(p,\omega)$ (higher than or equal to $u_1(p,\omega)$ respectively). 

If the intersection number of $A(\omega)$ and $B(r)$ is $-1$ at $b(p,\omega)$, there exists a neighborhood $U\subset H(r)$ of $b(p,\omega)$ such that all points in $U\cap C_1$ is above $A(\omega)$ for trader $1$. Likewise, all points in $U\cap C_2$ is below $A(\omega)$. 

Let $U'$ be an open neighborhood of $\omega$ where the local equilibrium price selection map $s: U'\rightarrow P$ is defined. Without loss, we choose $U'$ such that the image of the projection of $U$ onto $\Omega$ is contained in $U$. Then if $\omega'\in U'$ and $\omega_1'\lneqq \omega_1$, $A(\omega')$ lies below $A(\omega)$ for trader $1$. So $b(s(\omega'), \omega')$ lies below $A(\omega)$ for trader $1$. Then $b(s(\omega'), \omega')$ belongs to the curve $C_2$. Thus we have $u_1(f_1(s(\omega'),s(\omega')\cdot \omega'_1))> u_1(f_1(s(\omega),s(\omega)\cdot \omega_1))$. 

Similarly we can show that if the intersection number of $A(\omega)$ and $B(r)$ is $1$ at $b(p,\omega)$, then  $u_1(f_1(s(\omega'),s(\omega')\cdot \omega'_1))< u_1(f_1(s(\omega),s(\omega)\cdot \omega_1))$ for $\omega'\in U'$ with $\omega'_1<\omega_1$. 

Note: although the proof shows there exists the transfer problem which improve trader $1$'s utility level after he/she gives away some of his/her endowment, there is nothing special for trader $1$. A similar argument works for any trader $i$.    
 \end{proof}

\begin{rem}
Suppose that we are at a no-trade equilibrium $(p,\omega)\in E$. In~\citep{Balasko-value-book}, Balasko showed that the set of regular equilibria is partitioned into a finite number of path-connected components and the index is constant in each component. Moreover he showed that the set of no-trade equilibria is contained in the unique component of index $1$. Therefore the transfer problem arises only for a large amount of trades. 
\end{rem}

\section{Open questions}
\begin{oprob}
Our model does not have a production sector. It would be interesting to see how the result changes if we introduce a production sector into the model.
\end{oprob}
\begin{oprob}
We have avoid the issue of trade costs. It is natural to ask: what happens if we introduce various kinds of trade impediments? Some discussions are in~\citep{Samuelson-transfer-EconomicJournal2}.  
\end{oprob}
\section*{Acknowledgements}
This research was supported by a postdoctoral fellowship at the National Center for Theoretical Sciences at the National Taiwan University.  
\bibliography{econbib}

\end{document}